\documentclass{article}
\usepackage[english]{babel}
\usepackage[letterpaper,top=2cm,bottom=2cm,left=2cm,right=2cm,marginparwidth=1.75cm]{geometry}
\usepackage{tgheros}
\usepackage[T1]{fontenc}
\usepackage{amsmath}
\usepackage{amsfonts}
\usepackage{amssymb}
\usepackage{amsthm}
\usepackage{mathtools}
\usepackage[colorlinks=true, allcolors=blue]{hyperref}
\usepackage{asymptote}
\usepackage{changepage}
\usepackage{graphicx}
\usepackage{url}
\usepackage[labelfont=bf]{caption}
\usepackage{appendix}

\usepackage[style=alphabetic, maxnames=10, url=false]{biblatex}
\usepackage{csquotes}

\newtheorem{theorem}{Theorem}[section]
\newtheorem{lemma}[theorem]{Lemma}

\newtheorem{definition}[theorem]{Definition}

\newcommand{\R}{\mathbb{R}}
\newcommand{\RR}{\mathbb{R}}

\DeclareMathOperator{\tr}{tr}

\DeclareMathOperator{\nnz}{nnz}
\DeclareMathOperator{\poly}{poly}
\newcommand{\OPT}{\textnormal{OPT}}

\title{Faster high-accuracy multicommodity flow in dense graphs}
\author{Chenxin Dai, Alicia Stepin, Colin Tang\footnote{Carnegie Mellon University, $\{\text{chenxin2,astepin,cstang}\}\text{@andrew.cmu.edu}$}}

\begin{document}
\maketitle

\begin{abstract}
    We give a fast algorithm for solving min-cost $k$-commodity flow. The basic idea is to construct an auxiliary linear program that has low rank and whose minimum value is at most $1/k$ times the minimum value of the original problem (thus, solving this auxiliary linear program will make at least $1/k$ fraction of progress in the original problem). Low-rank linear programs can be solved quickly using black-box techniques. Thus, our algorithm runs in time $\tilde{O}(\poly(k)(n^{2.5}+m\sqrt{n}))$ on a directed graph with $n$ vertices and $m$ edges. We do not rely on any fast matrix multiplication.
\end{abstract}

\section{Introduction}

Linear programming remains one of the most fundamental primitives in convex optimization.
Current research is devoted to improving interior-point methods for solving linear programs. See \cite{kv96, LS15a, chen2022maximumflowminimumcostflow} for previous results. A single 
interior-point iteration in these methods typically reduces the problem of making progress along a central path to solving a few auxiliary linear systems. So, the complexity of the algorithm is governed both by the number of iterations and by the cost of these linear solves, which makes structures classes of linear programs particularly amenable to faster algorithms. 

This perspective has been especially fruitful for flow problems, which, themselves, have been studied extensively. The class of network-flow problem is a classic class of problems in computer science, where very efficient algorithms are known. Single-commodity max flow and min-cost flow are fundamental problems in optimization, where their study begins with the classic augmenting-path result of Fork and Fulkerson \cite{FordFulkerson1956}. More recently, interior point methods for these problems have pushed the runtime to nearly linear (in the number of edges) time, first by a randomized algorithm and subsequently deterministically \cite{chen2022maximumflowminimumcostflow, BrandEtAl2023Deterministic}. Recent work has also produced a randomized parallel algorithm for exact single-commodity min-cost flow with $\widetilde O(m+n^{3/2})$ work and $\widetilde O(\sqrt n)$ depth \cite{BrandGholizadehJiangVos2025}.

We focus this paper on the problem of \textbf{minimum-cost $k$-commodity flow}.

%However, if we generalize to multicommodity flow, in some settings this is known to be as hard as solving general linear programs (\cite{ding2022twocommodityflowequivalentlinear}). One motivation for studying flow problems comes from empirical risk minimization.
Take $G=(V,E)$ be a directed graph with $n$ vertices and $m$ edges, $k$ a positive integer,  $c\in\RR^m$, and  $d_1,d_2,\dots,d_k\in \RR^n$. Given some cost vector $c \in \RR^m$ and demand vectors $d_1,\ldots,d_k\in\R^n$, where $k\ge2$, write
$[k]=\{1,\ldots,k\}$. Thus, the minimum-cost $k$-commodity flow problem is the following linear program

\begin{equation}\label{eq:min-cost-flow}\begin{array}{rll}
    \text{Minimize}&c^\top (f_1+f_2+\dots+f_k)&\\
    \text{where}&f_1,f_2,\dots,f_k\in\RR^m &\text{are variables}\\
    \text{subject to:}& f_i(e)\ge 0 &\text{for all } 1\le i \le k \text{ and } e \in E\\
    &f_1(e)+f_2(e)+\dots+f_k(e)\le 1&\text{for all }e\in E\\
    &B^\top f_i = d_i &\text{for all } 1 \le i \le k
\end{array}\end{equation}
Here $B\in\RR^{m\times n}$ is the edge-vertex directed incidence matrix of $G$: for any directed edge $e = (v,w)\in E$, we have $B_{ev} = -1$ and $B_{ew}=+1$ with all zeros elsewhere.

Think of the variables $f_1,f_2,\dots,f_k$ as representing \textbf{flows} of $k$ distinct commodities, the vector $c$ as representing the \textbf{cost} of using each edge, and the vectors $d_1,d_2,\dots,d_k$ as representing the desired \textbf{demand} on each vertex (the excess or deficit induced by the flow of a commodity towards or away from a vertex). The constraint $f_1(e)+f_2(e)+\dots+f_k(e)\le 1$ says that the combined flow of all $k$ commodities cannot exceed the \textbf{capacity} of each edge, which is taken to be $1$ here.

For high-accuracy minimum-cost multicommodity flow on possibly
directed graphs, the strongest specialized running-time bound is due to van den Brand and Zhang~\cite{vandenBrandZhang2023}. For integer capcities, costs and demands of polynomial magnitude, and for $\epsilon^{-1}\le\poly(n)$, their deterministic algorithm runs in time $\widetilde O\!\left(k^{5/2}\sqrt m\,n^{\omega-1/2}\right)$. It performs 
$\widetilde O(\sqrt{km}\log(CU/\epsilon))$ sequential interior-point
iterations, where $C, U$ bound the numerical data. The returned flows have cost at most $\OPT + \varepsilon$, are nonnegative and obey shared capacities exactly, while conservation is guaranteed up to an additive $\epsilon$ error.
We remark that the shared cost vector and unit capacities are a special case of this formulation. 
We note that this theorem is the closest result when it comes to total-running-time comparison, but it is neither an exact-conservation nor a parallel-depth.

A recent focus in this area has been to present an algorithm with a runtime dependent on the \emph{intrinsic dimension} of the problem, which can be roughly thought of as the rank.
We remark on the rank-sensitive box-LP method of ~\cite{LS20}. They introduce one unused-capacity variable per edge for a total of $(k+1)m$ variables and equality rank exactly
$m+k(n-q)$, and so it requires
$\widetilde O(\sqrt{m+k(n-q)}\log(1/\epsilon))$ adaptive weighted
normal-system rounds.
The goal of this paper is to reduce the adaptive solve count in the dense regime. Specifically, we remove the $m$ shared-capacity equations from the linear systems without relaxing shared capacities. We do this by encoding feasible edgewise transfers through our zonotope method.

We make the following mild assumptions: \begin{equation}\label{eq:flow-assumptions}\begin{aligned}
    n &\le m \le \poly(n)\\
    k &\le n^{0.1}\\
    \|c\|_\infty, \|d_1\|_\infty, \|d_2\|_\infty, \dots, \|d_k\|_\infty &\le \poly(n)
\end{aligned}\end{equation}

We present an algorithm for solving \eqref{eq:min-cost-flow} that runs in time $\tilde{O}(\poly(k)(n^{2.5}+m\sqrt{n}))$.%, where $\omega < 2.38$ is the matrix multiplication exponent. Note that $2.38<2.5$. 

%\begin{theorem}\label{thm:slow-flow}
%    There exists an algorithm $\textnormal{\texttt{SlowAlgorithm}}$ which, assuming the mild assumptions in \eqref{eq:flow-assumptions}, can obtain a solution to \eqref{eq:min-cost-flow} that attains value at most $1/\poly(n)$ plus the minimum value of \eqref{eq:min-cost-flow}. The algorithm runs in time $\tilde{O}(\poly(k)n^{2.5})$.
%\end{theorem}

\begin{theorem}\label{thm:slow-flow}
    There exists an algorithm which, assuming the mild assumptions in \eqref{eq:flow-assumptions}, can obtain an almost-feasible solution to \eqref{eq:min-cost-flow} that attains value at most $1/\poly(n)$ plus the minimum value of \eqref{eq:min-cost-flow}. The algorithm runs in time $\tilde{O}(\poly(k)(n^{2.5}+m\sqrt{n}))$. Here, ``almost-feasible'' means that $\|B^\top f_i -d_i\|_\infty \le 1/\poly(n)$.
\end{theorem}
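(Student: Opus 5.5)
The plan follows the abstract: an outer loop of progress steps, each solving a low-rank auxiliary linear program with a black-box solver. We assume a feasible (approximately) optimal flow exists and first find an almost-feasible starting point $f^{(0)}=(f^{(0)}_1,\dots,f^{(0)}_k)$. Write $\Phi(f)=c^\top\sum_i f_i-\OPT$ for the optimality gap, adding a penalty term on the demand violation if the start is infeasible. A standard choice is a big-$M$ phase-one construction with artificial edges of cost $\poly(n)$, which is still a $k$-commodity flow instance on a slightly larger graph. By \eqref{eq:flow-assumptions}, $\Phi(f^{(0)})\le\poly(n)$.

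\textbf{The auxiliary program.} Given the current $f$, we look for a correction $g=(g_1,\dots,g_k)$ with $B^\top g_i=0$ (or equal to the residual demand) such that $f+g$ remains feasible. The key idea is to drop the $m$ shared-capacity rows from the linear systems. Write $s(e)=1-\sum_i f_i(e)$ for the edge slack. We restrict the corrections to a ``zonotope'' of edgewise transfers: on each edge $e$, $g(e)$ is a combination of the $k$ directions $\{e_i\}$ and the transfers $\{e_i-e_j\}$, scaled by $f_i(e)/k$ and $s(e)/k$. Concretely, $g_i(e)=\lambda_i(e)-f_i(e)/k$ with $0\le\lambda_i(e)$ and $\sum_i\lambda_i(e)\le (\sum_i f_i(e)+s(e))/k$.

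Every point of this box-like per-edge region keeps $f+g$ nonnegative and within capacity, because each coordinate moves by at most a $1/k$ fraction of the available room. Conversely, $g^\star=(f^\star-f)/k$ lies in the region, since $f^\star_i(e)/k\le \lambda$-budget and $\sum_i f^\star_i(e)\le 1$. Hence the auxiliary minimum is at most $-\Phi(f)/k$, and one exact step reduces the gap by a factor $(1-1/k)$. The remaining equality constraints are only the $kn$ conservation constraints. The per-edge constraints are a product of simple $k$-dimensional sets, each with a cheap self-concordant barrier of parameter $O(k)$. The program therefore has rank $O(kn)$ with $O(km)$ variables.

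\textbf{Solving each auxiliary LP.} We invoke the rank-sensitive solver of \cite{LS20}, which needs $\tilde O(\sqrt{kn})$ iterations, each solving a weighted Laplacian-like normal system of size $kn$. Each system is a block matrix $\mathcal B^\top W\mathcal B$ with $k\times k$ blocks per vertex pair, assembled in $O(k^2 m)$ time. We solve it by dense Gaussian elimination in $O((kn)^3)$ per iteration, or better by inverse maintenance/low-rank updates in amortized $\tilde O(\poly(k)(n^2+m))$ per iteration. Multiplying by $\tilde O(\sqrt n)$ iterations gives $\tilde O(\poly(k)(n^{2.5}+m\sqrt n))$ for each auxiliary LP.

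\textbf{Overall bound and main obstacle.} The outer loop needs $O(k\log(\poly(n)))=\tilde O(k)$ rounds to drive the gap to $1/\poly(n)$, which preserves the claimed running time. The approximate LP solutions give $\|B^\top f_i-d_i\|_\infty\le 1/\poly(n)$ after rounding the accumulated errors, and nonnegativity and capacity hold exactly by the zonotope construction. I expect the main obstacle to be the per-iteration linear algebra: achieving $n^2+m$ amortized cost per interior-point step without fast matrix multiplication. I would first try the Lee--Sidford style inverse maintenance, where weights change by few large coordinates per step, with $k\times k$ block Sherman--Morrison updates. I would also need to check that the inexact solves preserve the $1/k$ progress guarantee, by ensuring each auxiliary LP is solved to additive accuracy $\Phi(f)/(2k)$.
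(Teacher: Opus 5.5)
Your outer loop and phase-one initialization match the paper, but your auxiliary program does not deliver the low rank you claim. Since $\sum_i f_i(e)+s(e)=1$, the constraints $\lambda_i(e)\ge 0$, $\sum_i\lambda_i(e)\le 1/k$ say exactly $g_i(e)\ge -f_i(e)/k$ and $\sum_i g_i(e)\le s(e)/k$. So your per-edge region is precisely $\tfrac1k S(e)$, the inner body in \eqref{eq:effic}. Feasibility and $1/k$-efficacy are then trivial, but the auxiliary LP is just \eqref{eq:min-cost-flow-delta-f} under the substitution $\Delta f=kg$. The outer loop therefore gains nothing, and every edge still carries the coupled constraint $\sum_i g_i(e)\le s(e)/k$.

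The step that fails is ``the program therefore has rank $O(kn)$'' together with the $\tilde O(\sqrt{kn})$ iteration count from \cite{LS20}. That $\sqrt{\mathrm{rank}}$ bound is for $\min c^\top x$ subject to $A^\top x=b$ and \emph{independent} interval bounds $l\le x\le u$. A simplex is not a product of intervals. For $k\ge 2$ it is not even a linear image of a box, since such images are zonotopes and hence centrally symmetric. To reach LS20 form you must add an unused-capacity slack per edge with $m$ equality rows. That gives rank $m+kn$ and $\tilde O(\sqrt{m+kn})$ iterations, which is exactly the baseline the paper sets out to beat. A barrier of parameter $O(k)$ per edge likewise gives only $\tilde O(\sqrt{km})$ iterations. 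Even granting your per-iteration cost, the total becomes $\tilde O(\sqrt m\,(n^2+m))$, about $n^3$ on dense graphs.

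The missing idea is to place a zonotope between the two bodies, $\tfrac1k S(e)\subseteq K(e)\subseteq S(e)$, as required by \eqref{eq:feas} and \eqref{eq:effic}. Your remark about transfers along $e_i-e_j$ and $e_i$ points this way, but the transfer amounts must be independent box variables. With $W=(f_1(e),\dots,f_k(e),s(e))$, take one variable $\xi_{\mu\nu}(e)\in[-w_\mu/k,\,w_\nu/k]$ for each pair $\mu<\nu\le k+1$, acting along $e_\mu-e_\nu$, where commodity $k+1$ is unused capacity. You must then prove both containments in \eqref{eq:zonotope-bary}. The inner one needs an explicit choice of $\xi$ reaching each vertex $(1-\tfrac1k)W+\tfrac1k e_\mu$; a ``moves by at most $1/k$'' argument does not suffice. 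Once this is done, the only equality rows are the $kn$ conservation constraints, and the $\tilde O(\sqrt{kn})$ count is justified.

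You also leave the per-iteration solve open, and dense elimination at $(kn)^3$ per step is too slow. The paper gets $\tilde O(km+k^2n^2)$ per step without fast matrix multiplication, as follows. It samples the $4$-row-sparse constraint matrix by leverage scores down to $\tilde O(kn)$ rows. It uses the resulting sparse normal matrix, solved with Kelner's $\tilde O(n\cdot\nnz)$ solver, as a preconditioner for Richardson iteration. It obtains the leverage scores by bootstrapping through the regularized matrices $A_\lambda$, halving $\lambda$ each round.
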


A naive approach to solving the linear program in \eqref{eq:min-cost-flow} might suggest using rank-sensitive linear programming algorithms, but this proves to be expensive because this linear program has high rank. Instead, our algorithm repeatedly constructs and solves subproblems that have low rank, which can be solved quickly. These subproblems are \textbf{zonotope} flows and will be defined in \eqref{eq:zonotope}.

We remark that we cannot hope to find a feasible solution, as opposed to an \emph{almost-}feasible solution, while preserving the runtime. Already for two commodities, exact directed flow has the same polynomial-time expressive power of general linear-programming via a reduction in \cite{Itai78}. For integer-coefficient LP's with polynomially bouned data, \cite{ding2022twocommodityflowequivalentlinear} show nearly-linear-size reductions that apply to high-accuracy approximations with only a polynomial loss in additive error. However, these reductions produce only sparse instances. We also remark that in general the solution to such a system can be rational numbers whose numerator and denominator are $\Omega(n)$ digits, which incurs a $\Omega(n)$-factor blowup of any arithmetic operations.

\paragraph{Notation.}
\begin{center}
\small
\renewcommand{\arraystretch}{1.15}
\begin{tabular}{@{}p{0.19\linewidth}p{0.74\linewidth}@{}}
\hline
Notation & Meaning \\
\hline
$G=(V,E)$ & Directed graph with vertex set $V$ and edge set $E$. \\
$n=|V|$, $m=|E|$ & Numbers of vertices and directed edges, respectively. \\
$k$ and $[k]$ & Number of commodities and the index set
  $[k]=\{1,2,\ldots,k\}$. \\
$B\in\R^{m\times n}$ & Directed edge-vertex incidence matrix. \\
$f_i,c_i\in\R^m$ & Flow and edge-cost vectors for commodity $i$. \\
$d_i\in\R^n$ & Net-demand vector for commodity $i$; conservation is
  $B^\top f_i=d_i$. \\
%$u\in\R_{\geq0}^m$ & Shared edge-capacity vector. \\
$\OPT$ and $\epsilon$ & Optimal objective value and requested additive
  accuracy. \\
$\|x\|_p$ & The $\ell_p$ norm of $x$. \\
$\tilde {O}(\bullet)$ & Asymptotic notation suppressing
  polylogarithmic factors. \\
$\omega$ & The matrix multiplication exponent.\\
\hline
\end{tabular}
\end{center}

\section{Zonotope flows}

Let us briefly describe our algorithm for Theorem \ref{thm:slow-flow}. Suppose we currently have values for the variables \\ $f_1,f_2,\dots,f_k$, such that $c^\top (f_1+f_2+\dots+f_k)$ is rather close to the minimum. Let us rewrite \eqref{eq:min-cost-flow} as follows: 
\begin{equation}\label{eq:min-cost-flow-delta-f}\begin{array}{rll}
    \text{Minimize}&c^\top (\Delta f_1+\Delta f_2+\dots+\Delta f_k)&\\
    \text{where}&\Delta f_1,\Delta f_2,\dots,\Delta f_k\in\RR^m &\text{are variables}\\
    \text{subject to:}& \Delta f_i(e)\ge -f_i(e) &\text{for all } 1\le i \le k \text{ and } e \in E\\
    &\Delta f_1(e)+\Delta f_2(e)+\dots+\Delta f_k(e)\le 1-f_1(e)-f_2(e)-\dots-f_k(e)&\text{for all }e\in E\\
    &B^\top \Delta f_i = 0 &\text{for all } 1 \le i \le k
\end{array}\end{equation} This is indeed an equivalent problem, since we may update $f_i \gets f_i + \Delta f_i$ (as suggested by the $\Delta$ notation).

The basic idea is to construct a new linear program (which we will call \textbf{zonotope} flow), which has low rank, and whose minimum is at most $\frac{1}{k}$ times the minimum of \eqref{eq:min-cost-flow-delta-f} (note that this is a negative number). By using Lemma \ref{lem:n25} to solve the zonotope linear program quickly, we make at least a $\frac{1}{k}$ amount of progress in the original problem \eqref{eq:min-cost-flow-delta-f}. (Thus, we only need to solve $O(k\log n)$ zonotope linear programs in order to solve \eqref{eq:min-cost-flow} up to $1/\poly(n)$ additive error, as required in Theorem \ref{thm:slow-flow}.)

Let us now describe the zonotope linear program. We are searching for an integer $r \ge k$ and real numbers \\  $(\alpha_{ij})_{1\le i \le k; 1 \le j \le r}$ and vectors $l_1,l_2,\dots,l_r\in\RR^m$ and $u_1,u_2,\dots,u_r \in\RR^m$ and a constant $0<\lambda\le 1$ with a desirable property which will be specified later. 

\begin{definition}
\label{def:zflow}
For a $k$-commodity flow instance as defined at the start of the paper,
an $r$-commodity \textbf{zonotope flow} is given by
mapping coefficients $\alpha \in \R^{k \times r}$
and capacities $l_1, \ldots, l_{r}, u_{1}, \ldots, u_r \in \R^{m}$ and is the following linear program:
\begin{equation}\label{eq:zonotope}
    \begin{array}{rll}
        \text{Minimize} & \displaystyle c^\top \sum_{i=1}^k \sum_{j=1}^r \alpha_{ij} x_j& \\
        \text{where} & x_1,x_2,\dots,x_r\in\RR^m&\text{are variables}\\
        \text{subject to:} &l_j(e) \le x_j(e) \le u_j(e) &\text{for all }1 \le j \le r \text{ and } e \in E \\
        & \displaystyle B^\top \sum_{j=1}^r \alpha_{ij} x_j = 0 &\text{for all } 1 \le i \le k
    \end{array}
\end{equation}
That is, the vertex residue conditions exist in the $k$ commodities mapped to (via $\alpha x$),
but the capacities exist in the $r$ commodities.
\end{definition}

Observe that \eqref{eq:zonotope} is in the form required by Lemma \ref{lem:n25}. Given a feasible point $(x_1,x_2,\dots,x_r)$ of \eqref{eq:zonotope}, we would like to set $\Delta f_i \gets \displaystyle \sum_{j=1}^r \alpha_{ij} x_j$, and hope the following desirable properties hold:
\begin{itemize}
    \item (Feasibility) $(\Delta f_1, \Delta f_2,\dots,\Delta f_k)$ is a feasible point of \eqref{eq:min-cost-flow-delta-f}
    \item ($\lambda$-Efficacy) The minimum value of \eqref{eq:zonotope} is at most $\lambda$ times the minimum value of \eqref{eq:min-cost-flow-delta-f}
\end{itemize} 
%We would also like the following desirable property:\begin{itemize}
%    \item ($\lambda$-Efficacy) The minimum value of \eqref{eq:zonotope} is at most $\lambda$ times the minimum value of \eqref{eq:min-cost-flow-delta-f}
%\end{itemize}
If both Feasibility and $\lambda$-Efficacy hold, we can do the following: Use Lemma \ref{lem:n25} to solve \eqref{eq:zonotope}, which will make at least a $\lambda$ amount of progress in the original problem \eqref{eq:min-cost-flow-delta-f}. We will show in Section \ref{sec:details} that we can find data which satisfy both Feasibility and $\lambda$-Efficacy, with $\lambda = \frac{1}{k}$.

Now let us explain the name ``zonotope''. We are searching for sufficient conditions for Feasibility and $\lambda$-Efficacy. For fixed $e\in E$, define the convex body \begin{multline*}K(e) \coloneq \left\{\left(\sum_{j=1}^r \alpha_{1j} x_j(e),\sum_{j=1}^r \alpha_{2j} x_j(e),\dots,\sum_{j=1}^r \alpha_{kj} x_j(e)\right):x_1(e),x_2(e),\dots,x_r(e)\in\RR \right.\\ \text{ such that } l_j(e)\le x_j(e) \le u_j(e) \text{ for all }1\le j \le r\Bigg\}\end{multline*} This is a linear image of the hyperbox \[\bigtimes_{j=1}^r [l_j(e),u_j(e)]\] and is thus a zonotope, justifying the name. Additionally, define the convex body \begin{multline*}S(e)\coloneq \{(y_1,y_2,\dots,y_k)\in\RR^k: y_i \ge -f_i(e) \text{ for all } 1 \le i \le k,\\ \text{and } y_1+y_2+\dots+y_k \le 1 - f_1(e) - f_2(e) -\dots - f_k(e)\}\end{multline*} Observe that this is a simplex. Now, a sufficient (and necessary) condition for Feasibility is that \begin{equation}\label{eq:feas}K(e)\subseteq S(e)\text{ for all } e\in E,\end{equation} whereas a sufficient condition for $\lambda$-Efficacy is that \begin{equation}\label{eq:effic}\lambda \cdot S(e) \subseteq K(e)\text{ for all } e\in E,\end{equation} where $\lambda \cdot S(e)$ denotes the dilation by a factor of $\lambda$ with respect to the origin. (Indeed, if $(\Delta f_1^*, \Delta f_2^*, \dots, \Delta f_k^*)$ denotes the minimizer of \eqref{eq:min-cost-flow-delta-f}, then $\lambda \cdot (\Delta f_1^*, \Delta f_2^*, \dots, \Delta f_k^*)$ has value $\lambda$ times the minimum of \eqref{eq:min-cost-flow-delta-f}, and is moreover in the image of the mapping $(x_j)\mapsto \left(\displaystyle \sum_{j=1}^r \alpha_{ij} x_j\right)$.)

%It is worth noting that unlike Lee-Sidford 2019, our method requires each entry of $c$ (i.e. all costs) to be nonnegative.

%\section{Notation}
%We write $\RR_{\ge 0}$ for the set of nonnegative real numbers. 

\section{Construction of zonotope flows, and convergence}\label{sec:details}
We give an explicit construction for the data $r,(\alpha_{ij}),l_1,l_2,\dots,l_r,u_1,u_2,\dots,u_r,\lambda$, with $\lambda = \frac{1}{k}$, such that both Feasibility and $\lambda$-Efficacy are satisfied. As observed previously, we only need to verify \eqref{eq:feas} and \eqref{eq:effic}.

Let us slightly reformulate the problem. Write $\Delta^k$ for the $k$-simplex, embedded in $\RR^{k+1}$ as the convex hull of the standard basis vectors $e_1,e_2,\dots,e_{k+1}\in\RR^{k+1}$ (think of this as barycentric coordinates). We will show that for each point $W=(w_1,w_2,\dots,w_{k+1})\in\Delta^k$, there exists a zonotope $K(W)$ satisfying \begin{equation}\label{eq:zonotope-bary}(1-\lambda)W + \lambda \Delta^k \subseteq K(W) \subseteq \Delta^k.\end{equation}

Construct $K(W)$ as follows. For all $1\le \mu < \nu \le k+1$, define the translation vector $v_{\mu\nu} \coloneq e_\mu-e_\nu$ (the standard basis vectors), and define $\eta_{\mu\nu}\coloneq \frac{1}{2k}w_\mu+\frac{1}{2k}w_\nu$. Moreover, define the point \[M(W)\coloneq \left(\frac{1}{2k}+\frac{k-1}{2k}w_1,\frac{1}{2k}+\frac{k-1}{2k}w_2,\dots,\frac{1}{2k}+\frac{k-1}{2k}w_{k+1}\right) = \frac{k-1}{2k}W+\frac{k+1}{2k}M_\Delta\] where $M_\Delta \coloneq (\frac{1}{k+1},\frac{1}{k+1},\dots,\frac{1}{k+1})$ is the centroid of $\Delta^k$. Now, define \[K(W) \coloneq M(W)+\left\{\sum_{1\le \mu < \nu \le k+1}\xi_{\mu\nu}v_{\mu\nu}:\xi_{\mu\nu}\in[-\eta_{\mu\nu},\eta_{\mu\nu}]\text{ for all } 1 \le \mu < \nu \le k+1\right\}\] which is manifestly a zonotope.

\textbf{Claim I.} \textit{We have $(1-\lambda)W + \lambda \Delta^k \subseteq K(W)$ (i.e. the left containment of \eqref{eq:zonotope-bary}).}
\begin{proof}
    It suffices to check that each vertex of the lhs is contained in the rhs. By symmetry, it suffices to check only the first vertex of the lhs. Namely, we only need to verify $(1-\lambda)W + \lambda e_1 \in K(W)$.

    Recalling that $\lambda = \frac{1}{k}$, we compute \[(1-\lambda)W + \lambda e_1 - M(W) = \left(\frac{1}{2k}+\frac{k-1}{2k}w_1,-\frac{1}{2k}+\frac{k-1}{2k}w_2,-\frac{1}{2k}+\frac{k-1}{2k}w_3,\dots,-\frac{1}{2k}+\frac{k-1}{2k}w_{k+1}\right)\] which is equal to \[\begin{array}
        {rrrrcr}
        (\frac{k}{2k}w_1&+\frac{1}{2k}w_2&+\frac{1}{2k}w_3&+\frac{1}{2k}w_4&\cdots&+\frac{1}{2k}w_{k+1},\\
        -\frac{1}{2k}w_1&+\frac{k-2}{2k}w_2&-\frac{1}{2k}w_3&-\frac{1}{2k}w_4&\cdots&-\frac{1}{2k}w_{k+1},\\
        -\frac{1}{2k}w_1&-\frac{1}{2k}w_2&+\frac{k-2}{2k}w_3&-\frac{1}{2k}w_4&\cdots&-\frac{1}{2k}w_{k+1},\\
        -\frac{1}{2k}w_1&-\frac{1}{2k}w_2&-\frac{1}{2k}w_3&+\frac{k-2}{2k}w_4&\cdots&-\frac{1}{2k}w_{k+1},\\
        \vdots&\vdots&\vdots&\vdots&\ddots&\vdots\\
        -\frac{1}{2k}w_1&-\frac{1}{2k}w_2&-\frac{1}{2k}w_3&-\frac{1}{2k}w_4&\cdots&+\frac{k-2}{2k}w_{k+1})
    \end{array}\] which can be expressed in the form $\displaystyle\sum_{1\le \mu < \nu \le k+1}\xi_{\mu\nu}v_{\mu\nu}$ if we set \[\xi_{\mu\nu}=\begin{cases}
        \frac{1}{2k}w_\mu-\frac{1}{2k}w_\nu&\text{if } \mu>1\\
        \frac{1}{2k}w_\mu+\frac{1}{2k}w_\nu=\eta_{\mu\nu}&\text{if } \mu=1.
    \end{cases}\]
\end{proof}

\textbf{Claim II.} \textit{We have $K(W)\subseteq \Delta^k$ (i.e. the right containment of \eqref{eq:zonotope-bary}).}
\begin{proof}
    By symmetry, it suffices to verify that for all possible choices of $\xi_{\mu\nu}$, we have \[\frac{1}{2k}+\frac{k-1}{2k}w_1 + \sum_{\nu>1}\xi_{1\nu} \ge 0.\] This follows from the fact that \[\frac{1}{2k}+\frac{k-1}{2k}w_1 = \frac{k}{2k} w_1 + \frac{1}{2k}(w_2+w_3+\dots+w_{k+1})= \sum_{\nu>1}\eta_{1\nu}.\]
\end{proof}

Observe that we may also write \[K(W)=W+\left\{\sum_{1\le \mu < \nu \le k+1}\xi_{\mu\nu}v_{\mu\nu}:\xi_{\mu\nu}\in\left[-\frac{1}{k}w_\mu,\frac{1}{k}w_\nu\right]\text{ for all } 1 \le \mu < \nu \le k+1\right\}.\] 

The corresponding data are as follows: We have $r=\binom{k+1}{2}$, where we identify the set $\{1,2,\dots,r\}$ with the set of pairs $(\mu,\nu)$. Now let $e\in E$ be fixed. Set \[W=(w_1,w_2,\dots,w_{k+1})=(f_1(e),f_2(e),\dots,f_k(e),1-f_1(e)-f_2(e)-\dots-f_k(e)).\] Now set \begin{align*}
    l_{\mu\nu}(e) &= -\frac{1}{k}w_\mu\\
    u_{\mu\nu}(e) &= \frac{1}{k}w_\nu\\
    \alpha_{i\mu\nu}&= \begin{cases}
        1 &\text{if } i=\mu\\
        -1&\text{if } \nu<k+1 \text{ and } i=\nu\\
        0&\text{otherwise}
    \end{cases}\\
    \lambda &= \frac{1}{k}
\end{align*} As noted above, this data satisfies \eqref{eq:feas} and \eqref{eq:effic}. We may interpret the variables $\xi_{\mu\nu}$ as representing ``taking flow from commodity $\nu$ and giving flow to commodity $\mu$'' (the commodity $k+1$ representing ``unused capacity'' or ``the null commodity'').

\begin{proof}[Proof of Theorem \ref{thm:slow-flow}]
Let us set aside the problem of how do we obtain an initial feasible point to \eqref{eq:min-cost-flow}, and assume we currently have a feasible point $(f_1,f_2,\dots,f_k)$; now how do we make at least $\frac{1}{k}$ of progress?

The data above is a linear program in the form of Lemma \ref{lem:n25}, with $n' = kn$ and $m' = \binom{k+1}{2}m$. The matrix $A'$ is a block matrix, whose blocks are scalar multiples of the matrix $B$; furthermore, $A'$ is $4$-row-sparse. Thus, setting aside for now the problem of how do we obtain $x_\text{init}$, Lemma \ref{lem:n25} implies that we can solve \eqref{eq:zonotope} in time $\tilde{O}(\poly(k)(n^{2.5}+m\sqrt{n}))$. As noted previously, in order to solve \eqref{eq:min-cost-flow} we only need to solve $O(k\log n)$ zonotope linear programs; since each one takes time $\tilde{O}(\poly(k)(n^{2.5}+m\sqrt{n}))$, the total runtime is still $\tilde{O}(\poly(k)(n^{2.5}+m\sqrt{n}))$, as claimed in Theorem \ref{thm:slow-flow}.

How do we obtain $x_\text{init}$? Note that $0$ is guaranteed to be a feasible point, and in order to obtain the $1/\poly(n)$ separation required, we simply perturb $l$ and $u$ slightly. At the end this means we have solved a slightly different problem. We can perturb the solution so as to obtain an almost-feasible solution to the original problem.

Now we will say how to obtain an initial feasible point to \eqref{eq:min-cost-flow}. Create a new vertex $v^*$ and add edges between $v^*$ and every other vertex, with very large $\poly(n)$ capacities and costs. Now there is guaranteed to be a flow that satisfies all vertex demands.
\end{proof}

\textbf{Note.} It is not possible to give data satisfying $\lambda > \frac{1}{k}$ and also satisfying both \eqref{eq:feas} and \eqref{eq:effic}; in other words, our choice of $\lambda$ is optimal. To see this, simply observe that all zonotopes are, up to translation, centrally-symmetric convex bodies. Thus, \eqref{eq:zonotope-bary} would imply that $-\lambda \Delta^k + v \subseteq \Delta^k$ for some translation vector $v$, but this is not possible if $\lambda > \frac{1}{k}$. However, it may be possible that by using a smaller value of $\lambda$, one can reduce $r$, in such a way that the total runtime is reduced. 

\section{Solvers for zonotope flows}

%the literature is a mess

This section presents the general LP result used by the reduction as well as the additional structural speedup needed for the advertised running time.

\begin{lemma}\label{lem:n25}
    Let $m,n$ be positive integers satisfying $n \le m \le \poly(n)$, and let $c,l,u\in\RR^{m}$, $b\in\RR^{n}$, $A\in\RR^{m\times n}$ such that the magnitudes and condition numbers are $\le\poly(n)$. \textbf{Further assume that $A$ is $k$-row-sparse; i.e. each row of $A$ contains only $k$ nonzero entries.}
    
    Consider the linear program \[\begin{array}{rll}
    \text{Minimize}&c^\top x&\\
    \text{where}&x\in\RR^{m}&\text{are variables}\\
    \text{subject to:}&l(e)\le x(e)\le u(e)&\text{for all } 1 \le e \le m\\
    &A^\top x = b&
\end{array}\] Assume the linear program is feasible and $A$ has full column rank. Suppose we are given an initial point $x_\textnormal{init}$ which is a almost-feasible point, with margins at least $1/\poly(n)$. Then in time $\tilde{O}(n^{1/2}(km + k^2n^{2}))$ we can find $x$ which is almost-feasible such that $\|A^\top x - b\|_2 \le 1/\poly(n)$ and such that $c^\top x \le (1/\poly(n))+\textnormal{OPT}$. 
\end{lemma}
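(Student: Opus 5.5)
This lemma is a box-constrained LP whose equality constraints have rank $n$, so I would invoke the rank-sensitive interior-point method of Lee--Sidford \cite{LS15a, LS20} essentially as a black box. What remains is to bound the per-iteration cost of the linear algebra. The plan has three parts.

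\textbf{Iteration count.} I would use the Lee--Sidford weighted path-following method with Lewis-weight barriers over the box $\prod_e [l(e),u(e)]$, with the equality constraints $A^\top x = b$ handled via the dual (or primal) central path.

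1. Under the stated hypotheses (magnitudes and condition numbers $\le \poly(n)$, full column rank, an initial almost-feasible point with margins $\ge 1/\poly(n)$), the cited results give $\tilde O(\sqrt{\operatorname{rank}(A)}\,\log(1/\epsilon)) = \tilde O(\sqrt n)$ iterations.
2. These iterations reach a point with additive error $1/\poly(n)$ in objective and in $\|A^\top x-b\|_2$.
3. The initial point only needs to be centered approximately. A standard path-following warm start from $x_{\textnormal{init}}$ costs another $\tilde O(\sqrt n)$ iterations, using the margin to bound the initial barrier-gradient norm.

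\textbf{Cost of one iteration.} Each iteration needs three things:

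1. Approximate leverage scores or Lewis weights of $D A$ for a diagonal $D$.
2. A solve in the normal matrix $A^\top D^2 A \in \RR^{n\times n}$.
3. A few matrix-vector products with $A$ and $A^\top$.

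Forming $A^\top D^2 A = \sum_e D_{ee}^2 a_e a_e^\top$ costs $O(k^2 m)$, since each row $a_e$ has at most $k$ nonzeros and so contributes $k^2$ entries. Products with $A$ cost $O(km)$. Leverage scores can be estimated by Johnson--Lindenstrauss sketching with $\tilde O(1)$ solves against $A^\top D^2 A$.

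\textbf{Main obstacle: making the normal solves cheap.} Done naively, one $n\times n$ factorization per iteration costs $n^3$, which gives $n^{3.5}$ in total. To reach $n^{1/2}(km+k^2n^2)$, each iteration must cost $\tilde O(km + k^2 n^2)$ amortized. I would do this with the now-standard lazy inverse maintenance of \cite{LS15a}.

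1. Because the weights change slowly (multiplicative stability in $\ell_2$ of the log-changes), we keep $D$ fixed up to $1\pm 1/\textnormal{polylog}$ approximation.
2. We maintain an approximate inverse $(A^\top \tilde D^2 A)^{-1}$, updated by low-rank Sherman--Morrison--Woodbury corrections only on coordinates whose weights have drifted.
3. This inverse is used as a preconditioner for conjugate gradient or iterative refinement, which needs $\tilde O(1)$ steps of $O(km+n^2)$ each.
4. The amortized update cost must then be bounded by $O(k^2n^2)$ per iteration without fast matrix multiplication. Each changed row is a $k$-sparse rank-one update, costing $O(kn + n^2)$ to apply to a dense inverse. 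The \cite{LS15a} stability bound gives that $\tilde O(\sqrt n)$ iterations trigger $\tilde O(n)$ row updates in total, up to $k$-dependent factors.
5. Occasional full recomputations, done $\tilde O(1)$ times at cost $O(n^3)$ or $O(k^2 m n)$, are covered either by the budget or by the standard epoch argument.

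Summing the pieces gives $\tilde O(\sqrt n(km + k^2 n^2))$. Verifying this amortization carefully, in particular that the number of triggered updates times $n^2$ fits within $k^2 n^{2.5}$, is the step I expect to be hardest. If the plain stability argument is insufficient, I would fall back on the heavy-hitter and sketching data structures for maintaining primal and dual iterates.
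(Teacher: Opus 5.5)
Your outer structure matches the paper's: $\tilde O(\sqrt n)$ Lee--Sidford iterations, each reducing to $\tilde O(1)$ solves in $A^\top WA$, plus Johnson--Lindenstrauss estimates of leverage scores. The gap is in step 4, which is supposed to make those solves cost $\tilde O(km+k^2n^2)$ per iteration, and your own numbers show it does not.

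If the run triggers $\tilde O(n)$ rank-one updates to a dense $n\times n$ inverse at $\Theta(n^2)$ each, the total is $\tilde O(n^3)$. That is not $\tilde O(km\sqrt n+k^2n^{2.5})$ in the relevant regime ($k\le n^{0.1}$, $m$ well below $n^{2.5}$). Batching does not help without fast matrix multiplication, because a rank-$r$ Woodbury update then costs $\Theta(n^2r)$, the same as $r$ rank-one updates. Concretely, under $\ell_2$-type stability, $\Theta(4^j)$ weights can drift by a constant factor within $2^j$ rounds. The amortized cost per round is therefore about $\sum_j 2^{-j}n^2 4^{j(\omega-2)}$ over $2^j\lesssim\sqrt n$. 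This is $O(n^2)$ only because $\omega<5/2$, which is why \cite{LS15a} uses fast rectangular multiplication and $n^\omega$ preprocessing. With $\omega=3$ it is $\Theta(n^{2.5})$ per round.

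Step 5 has the same defect: even a single $O(n^3)$ inversion or $O(k^2mn)$ recomputation is not ``covered by the budget''. The heavy-hitter fallback maintains the iterates; it does not reduce the cost of solving in $A^\top WA$, so it does not fix this. Allowing fast matrix multiplication would let you cite \cite{LS15a} as a black box, but your step 4, like the paper, sets out to avoid exactly that.

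The missing idea is to drop inverse maintenance and build a \emph{sparse} preconditioner from scratch in every iteration:
\begin{itemize}
\item Sample $\tilde O(n)$ rows of $W^{1/2}A$ by approximate leverage scores (Lemma~\ref{lem:lev-score-sample}). This gives $\tilde A$ with $\tilde A^\top\tilde A\approx_2 A^\top WA$. Because each row is $k$-sparse, $\tilde A^\top\tilde A$ has only $\tilde O(k^2n)$ nonzeros.
\item Solve with $\tilde A^\top\tilde A$ using the $\tilde O(n\cdot\nnz(M))$ solver of \cite{kelner2026solvinglinearsystemswidetildeomn}, in $\tilde O(k^2n^2)$ time. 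Preconditioned Richardson then solves the true system with $\tilde O(1)$ further products, each costing $O(km)$.
\item Break the circularity (leverage scores need a solver) with the scheme of \cite{lmp13}. Append $\lambda I_n$ below $A$, start at $\lambda=\poly(n)$ where the scores are trivial, and halve $\lambda$ $\tilde O(1)$ times. At each stage, the sparsifier of $A_\lambda$ preconditions $A_{\lambda/2}$, and Lemma~\ref{lem:gauss-solve} produces the next scores.
\end{itemize}
This gives $\tilde O(km+k^2n^2)$ per iteration with no amortization and no dependence on $\omega$. The $k^2n^2$ term in the target bound is the cost of this sparse solve, not of dense updates.
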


This is similar to Theorem 21 from \cite{LS15a}, but our proof is somewhat different. In particular, we do not have any dependence on $\omega$.

\begin{proof}[Proof of Lemma \ref{lem:n25}]
By the work in \cite{LS14b} (see \cite{LS20} for a more polished version), we know that in order to solve the linear program in Lemma \ref{lem:n25}, we can run an interior-point method that uses only $\tilde{O}(\sqrt{n})$ iterations, each iteration requiring us to solve a linear system of the form $A^\top W A x = y$ for some diagonal matrix $W$ (``weight matrix''). It suffices to give a solver for the matrix $A^\top W A$ that runs in time $\tilde{O}(km+k^2n^2)$. In what follows, we will assume $W$ is the identity matrix (the argument goes through for general $W$ mutatis mutandis).

Recall the recent result by Kelner, namely that we can solve a system of the form $Mx=y$ in time $\tilde{O}(n \cdot \nnz(M))$ (Theorem 1.1 of \cite{kelner2026solvinglinearsystemswidetildeomn}). If we naively use this to get a solver for the matrix $A^\top A$, this will take too long, since $A^\top A$ is in general a dense matrix. Instead, we will \emph{sparsify} $A$: sample $\tilde{O}(n)$ of its rows according to some probability distribution, resulting in the random matrix $\tilde{A}$, such that whp $\tilde{A}^\top \tilde{A}$ is a $2$-approximation to $A^\top A$ (in the sense of positive semi-definite matrices). The key idea is that $\tilde{A}^\top \tilde{A}$ is \emph{sparse}: it has only $\tilde{O}(k^2n)$ nonzero entries. Now we can apply Kelner's result; by preconditioned Richardson iteration, this gives a solver for the original matrix $A^\top A$ in time $\tilde{O}(km+k^2n^2)$ as desired.

How do we obtain $\tilde{A}$? We sample by \emph{leverage scores}.

Recall that if the rows of $A$ are labeled $a_1,a_2,\dots,a_m$, then the \textbf{$i^\text{th}$ leverage score} is given by \[\tau_i(A) \coloneq a_i^\top (A^\top A)^{-1} a_i\] which is a real number in the interval $[0,1]$. Observe that \[\sum_i \tau_i(A) = \tr\left(\sum_i a_i^\top (A^\top A)^{-1}a_i\right) = \tr\left(\sum_i a_ia_i^\top (A^\top A)^{-1}\right) = \tr((A^\top A)(A^\top A)^{-1}) = n.\] We have the following lemma, due to \cite{rudelson-vershynin} and \cite{Tropp_2011}:

\begin{lemma}\label{lem:lev-score-sample}
    Suppose $p_i\in [0,1]$ satisfy $p_i \ge O(\log n) \cdot \tau_i(A)$ for each $i$. Then, if we independently sample each row of $A$ with probability $p_i$, this will result in a matrix $\tilde{A}$ such that whp $\tilde{A}^\top \tilde{A}$ is a $2$-approximation to $A^\top A$. %We may round each $p_i$ up to the nearest multiple of $1/n$, in which case the sampling takes $\tilde{O}(m)$ time whp.
\end{lemma}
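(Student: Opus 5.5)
We prove Lemma \ref{lem:lev-score-sample} with the matrix Chernoff bound of \cite{Tropp_2011}, after reducing to a whitened problem. For each row $i$, let $\delta_i$ be an independent Bernoulli variable with mean $p_i$. The sampled matrix should be reweighted, so we take $\tilde{A}^\top\tilde{A} = \sum_i \frac{\delta_i}{p_i} a_i a_i^\top$; this is the intended meaning of ``sample with probability $p_i$'', since without the rescaling by $1/\sqrt{p_i}$ the estimator is biased. With this choice $\mathbb{E}[\tilde A^\top \tilde A] = A^\top A$.

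We then whiten. Write $H \coloneq A^\top A$, which is positive definite because $A$ has full column rank, and set $Y_i \coloneq \frac{\delta_i}{p_i} H^{-1/2} a_i a_i^\top H^{-1/2}$. The $Y_i$ are independent positive semidefinite random matrices with $\sum_i \mathbb{E} Y_i = I_n$. Each satisfies
\[
\|Y_i\| \le \frac{\|H^{-1/2}a_i\|_2^2}{p_i} = \frac{\tau_i(A)}{p_i} \le \frac{1}{C\log n},
\]
where $C$ is the constant hidden in the hypothesis $p_i \ge O(\log n)\,\tau_i(A)$. If $p_i = 1$, the row is kept deterministically and contributes a fixed summand that can be absorbed without affecting the bound, since $\tau_i \le 1$.

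Next we apply Tropp's matrix Chernoff inequality. With $\mu_{\min}=\mu_{\max}=1$ and uniform bound $R = 1/(C\log n)$, it gives
\[
\Pr\Big[\lambda_{\min}\big(\textstyle\sum Y_i\big) \le 1-\epsilon\Big] \le n\, e^{-\epsilon^2/(2R)}
\]
and a similar bound for $\lambda_{\max}\ge 1+\epsilon$ with exponent $\epsilon^2/(3R)$. Taking $\epsilon = 1/2$ and $C$ large enough makes both failure probabilities at most $n^{-10}$. Hence, with high probability,
\[
\tfrac12 I \preceq \textstyle\sum_i Y_i \preceq \tfrac32 I.
\]
Conjugating by $H^{1/2}$ yields $\frac12 A^\top A \preceq \tilde A^\top\tilde A \preceq \frac32 A^\top A$, which is a $2$-approximation (in fact a factor of $3$ between the two sides, which is fine for preconditioning, or rescale $\epsilon$).

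Finally, we note the expected sample size: it is $\sum_i p_i = O(\log n)\sum_i \tau_i = O(n\log n)$, using $\sum_i\tau_i = n$ as computed just above. By a scalar Chernoff bound the actual number of sampled rows is also $\tilde O(n)$ with high probability; this is what the surrounding argument needs.

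The only real obstacle is getting the right form of the matrix Chernoff bound, in particular the role of $R$ as the uniform operator-norm bound after whitening. The key point is that leverage-score sampling makes every whitened summand have norm at most $1/\log n$, and that is exactly what drives the logarithmic oversampling.
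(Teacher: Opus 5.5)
Your proof is correct and is the standard whitening-plus-matrix-Chernoff argument, including the implicit $1/\sqrt{p_i}$ reweighting; the paper does not write this out but defers to it by citing Rudelson--Vershynin and Tropp. Two small tidy-ups. First, under the literal hypothesis, $p_i\le 1$ already forces $\tau_i\le 1/(C\log n)$, so the $p_i=1$ case needs no special treatment. Under the intended reading $p_i\ge\min\{1,C\log n\,\tau_i\}$, the deterministic rows are handled by splitting each one into $\lceil C\log n\rceil$ equal copies, rather than by appealing to $\tau_i\le 1$. Second, your sample-size remark needs $p_i=O(\log n\,\tau_i)$, not just the lower bound the lemma assumes.
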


So if we could somehow obtain the leverage scores $\tau_i(A)$ (or even $\tilde{O}(1)$-approximations thereof), we could use Lemma \ref{lem:lev-score-sample} to obtain the desired sparsified $\tilde{A}$. The problem is that naively computing the leverage scores involves giving a solver for the matrix $A^\top A$, which is exactly the problem we started out with!

Instead, we follow the methods in \cite{lmp13}. For each $\lambda \ge 0$ define the auxiliary matrix $A_\lambda$ to consist of the square matrix $\lambda \cdot I_n$ appended below $A$. When $\lambda = 0$, $A_\lambda$ will have all the same leverage scores as $A$. Our strategy is as follows: \begin{itemize}
    \item Initialize $\lambda = \poly(n)$, large enough so that the leverage scores of $A_\lambda$ are easy to compute. Write $\tilde{\tau}_i(A_\lambda)$ for $2$-approximations to the leverage scores.
    \item Given $\tilde{\tau}_i(A_\lambda)$, use Lemma \ref{lem:lev-score-sample} to obtain a sparsified $\tilde{A}_\lambda$, with $\tilde{O}(n)$ rows, such that $\tilde{A}_\lambda^\top \tilde{A}_\lambda$ is a $2$-approximation to $A_\lambda^\top A_\lambda$. It follows that $\tilde{A}_\lambda^\top \tilde{A}_\lambda$ is a $8$-approximation to $A_{\lambda/2}^\top A_{\lambda/2}$.
    \item As noted above, we can use $\tilde{A}_\lambda^\top \tilde{A}_\lambda$ to do preconditioned Richardson iteration to obtain a solver for $A_{\lambda/2}^\top A_{\lambda/2}$ that runs in time $\tilde{O}(km+k^2n^2)$. Apply Lemma \ref{lem:gauss-solve} with this solver as the oracle; we see that we obtain $\tilde{\tau}_i(A_{\lambda/2})$ in time $\tilde{O}(km+k^2n^2)$.
    \item Set $\lambda \gets \lambda/2$ and repeat.
\end{itemize}

After running this loop $\tilde{O}(1)$ times, we will have $\lambda = 1/\poly(n)$, at which point $\lambda$ is so small that we can use $\tilde{\tau}_i(A_\lambda)$ as approximations of $\tau_i(A)$. As noted above, this allows us to use Lemma \ref{lem:lev-score-sample} to obtain $\tilde{A}$, as desired.
\end{proof}

\begin{lemma}\label{lem:gauss-solve}
    Suppose we are given an oracle solver for the matrix $A^\top A$. Then whp we can compute $\tilde{\tau}_i(A)$ which are $2$-approximations to the leverage scores $\tau_i(A)$ in time $\tilde{O}(\nnz(A))$ as well as $\tilde{O}(1)$ oracle calls.
\end{lemma}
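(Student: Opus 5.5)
The plan is the standard Johnson--Lindenstrauss sketch of leverage scores, as in Spielman--Srivastava and \cite{lmp13}. The starting identity is
\[
\tau_i(A) = a_i^\top (A^\top A)^{-1} a_i = \bigl\|A (A^\top A)^{-1} a_i\bigr\|_2^2 ,
\]
which holds because $(A^\top A)^{-1}A^\top A(A^\top A)^{-1} = (A^\top A)^{-1}$. So $\tau_i(A)$ is the squared length of the $i$th column of the $m\times m$ matrix $P = A(A^\top A)^{-1}A^\top$. Forming $P$ is too expensive, so we estimate these column norms with a random projection.

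Draw a random $\pm 1/\sqrt{t}$ (or Gaussian) matrix $G\in\RR^{t\times m}$ with $t = O(\log m)$ rows. By the Johnson--Lindenstrauss lemma and a union bound over the $m$ vectors $P e_i$, whp
\[
\|G P e_i\|_2^2 \in \left[\tfrac{1}{\sqrt 2},\sqrt 2\right]\tau_i(A) \quad\text{for all } i .
\]
Since $GP = \bigl(G A (A^\top A)^{-1}\bigr)A^\top$, we compute the sketch as follows.
\begin{itemize}
\item Form the $t$ vectors $y_j = A^\top g_j$, where $g_j$ is the $j$th row of $G$. This costs $O(t\cdot \nnz(A))$.
\item Call the oracle $t$ times to get $z_j \approx (A^\top A)^{-1} y_j$.
\item Output $\tilde\tau_i = \sum_j (a_i^\top z_j)^2$. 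Computing all $a_i^\top z_j$ again costs $O(t\cdot\nnz(A))$.
\end{itemize}
In total this is $\tilde O(\nnz(A))$ time and $\tilde O(1)$ oracle calls, as claimed.

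The main obstacle is that the oracle is only approximate: it presumably returns $z$ with $\|z - (A^\top A)^{-1}y\|_{A^\top A} \le \epsilon \|(A^\top A)^{-1}y\|_{A^\top A}$, for example after running the preconditioned Richardson iteration $O(\log(1/\epsilon))$ times. The per-row error is
\[
|a_i^\top(z - z^*)| \le \sqrt{\tau_i(A)}\,\|z-z^*\|_{A^\top A}
\]
by Cauchy--Schwarz in the $A^\top A$ inner product. The problem is that $\|z^*\|_{A^\top A} = \|P g_j\|_2$ can be as large as about $\sqrt{m}$ times the typical size of a single entry. Multiplicative accuracy on each $\tau_i$, including tiny ones, therefore needs $\epsilon \le 1/\poly(n)$, which costs only a $\log n$ factor in oracle iterations.

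Alternatively, we only need upper bounds that are within constant factors, with sums $\tilde O(n)$, for Lemma \ref{lem:lev-score-sample}. So it suffices to output $\tilde\tau_i + 1/\poly(n)$. This is a valid $O(1)$ overestimate, and its total mass stays $O(n)$ since $m\le\poly(n)$. For $A_\lambda$ we note that $\tau_i(A_\lambda)\ge$ a $1/\poly(n)$ floor is not guaranteed, so I would adopt this additive-slack convention, which is all the sampling step uses. I expect this bookkeeping of the oracle error to be the only delicate part; the JL estimate itself is routine.
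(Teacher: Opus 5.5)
Your proposal is correct and is the same argument as the paper's: the identity $\tau_i(A)=\|A(A^\top A)^{-1}a_i\|_2^2$, an $O(\log n)\times m$ Johnson--Lindenstrauss sketch $G$, forming $GA$, $O(\log n)$ oracle calls to get $Z=GA(A^\top A)^{-1}$, and outputting $\|Za_i\|_2^2$, which is your $\sum_j (a_i^\top z_j)^2$. The only addition is your handling of an inexact oracle, which the paper skips by treating the oracle as exact; note that your Cauchy--Schwarz bound already carries the factor $\sqrt{\tau_i(A)}$, so the error is relative even for tiny $\tau_i(A)$, and $\epsilon$ of order $1/\sqrt{m}$ (indeed $1/\sqrt{n}$ whp) suffices, making the one-sided additive-slack fallback unnecessary for the two-sided statement.
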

\begin{proof}
    The key identity is \[\tau_i(A) = \|A(A^\top A)^{-1}a_i\|_2^2.\] If $G$ is a random Gaussian $O(\log n) \times m$ matrix, then whp \[\tau_i(A) \approx_2 \|GA(A^\top A)^{-1}a_i\|_2^2\] by Johnson-Lindenstrauss lemma. Our algorithm is as follows.

    First compute $G$ (this takes time $\tilde{O}(m)$). Now compute $GA$ (this takes time $\tilde{O}(\nnz(A))$). Now run the oracle solver, once for each column of $(GA)^\top$; since there are $O(\log n)$ columns, this takes $O(\log n)$ oracle calls and results in the matrix $Z \coloneq GA(A^\top A)^{-1}$.

    Finally we may simply compute $Za_i$ for each $i$; this takes time $\tilde{O}(\nnz(A))$. We take $\tilde{\tau}_i(A)$ to be $\|Za_i\|_2^2$.
\end{proof}

%\todo{talk about using the more recent LS19 version,
%and then add in bit complexity bounds from Jan and Mehrdad's more recent papers...}

%\todo{just some statement to convince reader this is likely is fine, this is 50 pages...}

%\textbf{Note.} Under the same conditions as Lemma \ref{lem:n25},
%    but with the linear program coming from a $r$-commodity zonotope flow (Definition~\ref{def:zflow}),
%    the runtime of Lemma \ref{lem:n25} improves to $\tilde{O}(\poly(r)(n^\omega+m))$. 
%For this, observe that the matrix $A$ is a block matrix with $r \times k$ many blocks ($r \geq k$),
%and each block is a scalar multiple of the edge-vertex incidence matrix $B$.
%Furthermore, for a set of block rows of this matrix,
%linearity gives that multiplying $A$ against a dual solution gives
%\[
%\sum_{i} \alpha_{ij} B y_{i}
%=
%B \left( \sum_{i} \alpha_{ij} y_{i} \right),
%\]
%which is the same as running a graph-theoretic
%heavy hitter on the vector $\sum_{i} \alpha_{ij} y_i$.
%So up to a factor of $r$, the
%expander decomposition based heavy-hitter,
%as given in~\cite{BLNPSS0W20} still works.
%Combining this heavy hitter with the linear programming algorithm from~\cite{BLLSSSW21}, we have the running time.

%The runtime of Theorem \ref{thm:slow-flow} would improve to $\tilde{O}(\poly(k)(n^\omega+m))$. 
\section{Acknowledgements}
CT thanks Yang P.~Liu and Richard Peng for explaining the literature pertaining to \cite{LS15a}.

%\bibliographystyle{alpha}
%\bibliography{refs}
\printbibliography

\end{document}